\title{The entropy power conjecture implies the McKean conjecture}
\date{August 26, 2024}
\author{Guillaume Wang}
\begin{document}
\maketitle

\begin{abstract}
    After reviewing the entropy power, the McKean, and the Gaussian completely monotone conjectures,
    we prove that the first implies the second, for each order of the time-derivative.
    The proof is elementary and is based on manipulating the Bell polynomials.
\end{abstract}
\vspace{1em}

Let us review three conjectures about the successive time-derivatives of the entropy along the heat flow, following \cite{ledoux2022differentials}.
Let, for some probability measure $\mu_0$ over $\RR^D$,
$\mu_t = \mathrm{Law}(X_0 + \sqrt{t} G)$ where $G \sim \NNN(0, I_D)$ and $X_0 \sim \mu_0$. Equivalently, $\partial_t \mu_t = \frac12 \Delta \mu_t$, the heat equation.
Let
\begin{equation*}
	y(t) = -2 H(\mu_t)/D,
    \qquad \quad
	\dot{y}(t) = I(\mu_t)/D,
    \qquad \quad
	N(t) = e^{y(t)}
    = e^{-\frac{2}{D} H(\mu_t)},
\end{equation*}
where $H(\mu) = \int d\mu \log \frac{d\mu}{dx}$ is the (negative) differential entropy
and 
$I(\mu) = \int d\mu \norm{\nabla \log \mu}^2$ is the Fisher information.
Note that De Bruijn's identity asserts that 
$\frac{d}{dt} y(t) = \dot{y}(t)$.
$N(t)$ is called the entropy power of $\mu_t$.

Suppose $\mu_0$ has covariance $\sigma^2 I_D$ and denote $\sigma_t^2 = \sigma^2 + t$.
\begin{itemize}
	\item The entropy power conjecture \cite{toscani2015concavity} states that
	for all $m \geq 1$, 
    \begin{equation} \label{eq:entropy_power} \tag{EP}
        \forall t>0,~ (-1)^{m-1} \frac{d^m}{dt^m} N(t) \geq 0.
    \end{equation}
	\item The McKean conjecture \cite[Section~12]{mckean1966speed} states that
	for all $m \geq 1$,
    \begin{equation} \label{eq:mckean} \tag{McK}
        \forall t>0,~ (-1)^{m-1} \frac{d^{m-1}}{dt^{m-1}} \dot{y}(t) \geq \frac{(m-1)!}{(\sigma^2_t)^m}.
    \end{equation}
	\item The Gaussian completely monotone conjecture \cite{cheng2015higher} states that 
 	for all $m \geq 1$,
    \begin{equation} \label{eq:GCM} \tag{GCM}
        \forall t>0,~ (-1)^{m-1} \frac{d^{m-1}}{dt^{m-1}} \dot{y}(t) \geq 0.
    \end{equation}
\end{itemize}
\eqref{eq:entropy_power} and \eqref{eq:mckean} hold true for any $D$ when $m \leq 3$ and $\mu_0$ is log-concave \cite{toscani2015concavity},
\eqref{eq:mckean} holds true when $D=1$ and $m \leq 5$ and $\mu_0$ is log-concave \cite{zhang2018gaussian},
and \eqref{eq:GCM} holds true when
$D \leq 2$ and $m\leq 4$ \cite{cheng2015higher,guo2022lower},
as well as when $D \leq 4$ and $m \leq 3$ \cite{guo2022lower}.

The purpose of this note is to prove the following implication, which was hinted at in \cite[end of Section~4: ``It might be that the Entropy Power Conjecture is stronger than the McKean Conjecture'']{ledoux2022differentials}.
(This note does not imply any new result compared to the state of the art.)
\begin{proposition}
	For any $M \in \NN$, if \eqref{eq:entropy_power} holds for all $m \leq M$, then \eqref{eq:mckean} holds for all $m \leq M$.
\end{proposition}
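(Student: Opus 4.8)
The plan is to pass to $y = \log N$ and reduce \eqref{eq:mckean} to a single clean polynomial inequality in the logarithmic derivatives, then read off the required sign information from \eqref{eq:entropy_power} through a generating-function identity rather than a term-by-term induction.

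First I would set $b_m = (-1)^{m-1}\frac{d^{m-1}}{dt^{m-1}}\dot{y}(t) = (-1)^{m-1} y^{(m)}(t)$, so that \eqref{eq:mckean} is exactly the assertion $b_m \geq (m-1)!/(\sigma_t^2)^m = (m-1)!\,(1/\sigma_t^2)^m$. The value $1/\sigma_t^2$ is the Gaussian value of $\dot{y}$, and the case $m=1$, namely $\dot{y}(t) = I(\mu_t)/D \geq 1/\sigma_t^2$, is the classical Cramér--Rao bound, which I take as given. It therefore suffices to prove the following \emph{Key Lemma}: if \eqref{eq:entropy_power} holds for all orders $\leq m$, then $b_m \geq (m-1)!\, b_1^m$. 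Combining this with $b_1 = \dot{y} \geq 1/\sigma_t^2$ then yields \eqref{eq:mckean} at order $m$ at once.

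To prove the Key Lemma I would treat $h$ as a formal variable and expand the backward ratio $N(t-h)/N(t)$. Taylor's formula together with the sign pattern of \eqref{eq:entropy_power} gives
\[ \frac{N(t-h)}{N(t)} = 1 - Q(h), \qquad Q(h) = \sum_{m\geq 1} q_m h^m, \quad q_m = \frac{(-1)^{m-1} N^{(m)}(t)}{m!\,N(t)}, \]
so that \eqref{eq:entropy_power} up to order $M$ is precisely the statement $q_1,\dots,q_M \geq 0$, with $q_1 = N'/N = \dot{y} = b_1$. Taking logarithms and using $y = \log N$,
\[ \sum_{m\geq 1} \frac{b_m}{m!}\, h^m = y(t) - y(t-h) = -\log\bigl(1 - Q(h)\bigr) = \sum_{k\geq 1} \frac{Q(h)^k}{k}. \]
The right-hand side is a power series in the $q_i$ with \emph{nonnegative} coefficients; extracting the coefficient of $h^m$ yields
\[ \frac{b_m}{m!} = \sum_{k=1}^{m} \frac{1}{k} \sum_{m_1+\dots+m_k = m} q_{m_1}\cdots q_{m_k} \;\geq\; \frac{1}{m}\, q_1^m, \]
the last inequality keeping only the $k=m$ term (which forces every $m_i = 1$) and discarding the remaining nonnegative terms. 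This is exactly $b_m \geq (m-1)!\, b_1^m$, and it uses only $q_1,\dots,q_m \geq 0$, i.e.\ \eqref{eq:entropy_power} for orders $\leq m$. (Using merely $b_1 \geq 0$ in place of Cramér--Rao gives \eqref{eq:GCM} by the same computation.)

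The one genuinely delicate point is the sign bookkeeping, and this is where the choice of formulation is decisive. A naive induction that feeds in the single inequality from \eqref{eq:entropy_power} at order $m$, read through Faà di Bruno's identity $N^{(m)}/N = B_m(y^{(1)},\dots,y^{(m)})$ with $B_m$ the complete Bell polynomial, together with the lower bounds on $b_1,\dots,b_{m-1}$, does not close: the Bell polynomial contributes monomials of both signs, and already at $m=5$ one meets a term (a multiple of $b_1(b_2-b_1^2)^2$) that is negative after substituting the inductive bounds, so one cannot conclude termwise. Routing the computation instead through $N(t-h)/N(t)$ and the expansion $-\log(1-Q) = \sum_k Q^k/k$ is exactly what renders every coefficient manifestly nonnegative in the $q_i$ and isolates precisely the Gaussian contribution $q_1^m/m$. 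Once this organization is in place, the remaining checks --- that the formal identity is just Faà di Bruno in disguise, and that only the first $m$ instances of \eqref{eq:entropy_power} are needed at order $m$ --- are routine.
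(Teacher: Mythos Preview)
Your argument is correct and lands on the same Key Lemma as the paper, namely $b_m \geq (m-1)!\,b_1^m$, followed by Cram\'er--Rao. The organization, however, is genuinely different. The paper sets $Y_k = -b_k$, observes via Fa\`a di Bruno that \eqref{eq:entropy_power} reads $B_m(Y_1,Y_2,\dots)\leq 0$ for the complete Bell polynomials, and then proves $Y_n \leq -(n-1)!(-Y_1)^n$ by a short finite induction based on the recurrence $B_{n+1}=\sum_{i=0}^n \binom{n}{i} B_{n-i}\,Y_{i+1}$. Your route replaces this induction by the single generating-function identity $-\log(1-Q)=\sum_{k\geq 1} Q^k/k$, which makes the nonnegativity of every cross-term manifest and isolates the extremal term $q_1^m/m$ directly. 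Your version is arguably slicker, since no recurrence or induction is needed; the paper's version has the minor advantage of staying within finite polynomial identities and not invoking formal power series.

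One caveat about your closing paragraph: the ``naive induction'' you criticize---using \eqref{eq:entropy_power} only at order $m$ together with the inductive bounds $b_k \geq (k-1)!\,b_1^k$ substituted into the partition expansion of $B_m$---is not the induction the paper runs. The paper's induction keeps the full factors $B_{n-i}\leq 0$, i.e.\ it uses \eqref{eq:entropy_power} at \emph{all} orders $\leq m$ (exactly as your method does), and with that input the signs line up without difficulty. So your remark that a Bell-polynomial induction ``does not close'' applies to a weaker strategy than the one actually available, and should be tempered accordingly.
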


\begin{proof}
    We denote by $B_n(X_1, ..., X_n)$ or $B_n(X_1, X_2, ...)$ the complete exponential Bell polynomials.
    By Faa di Bruno's formula, for any $m \geq 1$,
    \begin{equation*}
        \frac{d^m}{dt^m} N(t) = \frac{d^m}{dt^m} e^{y(t)}
    	= e^{y(t)} B_m(\dot{y}, \dot{y}', \dot{y}'', ...)
    \end{equation*}
    where $\dot{y}' = \frac{d}{dt} \dot{y}$, $\dot{y}'' = \frac{d^2}{dt^2} \dot{y}$, etc.
    Moreover, by property of the Bell polynomials,
    \begin{equation*}
    	\forall n,~ \forall \beta \in \RR,~ B_n(\beta X_1, \beta^2 X_2, \beta^3 X_3, ...) = \beta^n B_n(X_1, X_2, ...).
    \end{equation*}
    So for $\beta=-1$, letting $Y_k = (-1)^k \dot{y}^{(k-1)}$ for all $k \geq 1$,
    \begin{align*}
    	(-1)^{m-1} \frac{d^m}{dt^m} N(t) 
        = (-1)^{m-1} e^{y(t)} B_m(\dot{y}, \dot{y}', \dot{y}'', ...) 
        &= -e^{y(t)} B_m(-\dot{y}, \dot{y}', -\dot{y}'', \dot{y}''', ...) \\
        &= -e^{y(t)} B_m(Y_1, Y_2, ...).
    \end{align*}
    
    Fix $M \in \NN$ and suppose that \eqref{eq:entropy_power} holds for all $1 \leq m \leq M$, i.e., $B_m(Y_1, Y_2, ...) \leq 0$ for all $1 \leq m \leq M$.
    Then by \autoref{lm:first_lm} below,
    \begin{align*}
        Y_m &\leq -(m-1)! (-Y_1)^m ~~~~\text{for all}~~ 1 \leq m \leq M, \\
        \text{i.e.,}~~~~
        (-1)^{m-1} \dot{y}^{(m-1)} &\geq (m-1)!~ \dot{y}^m.
    \end{align*}
    Now by the Cram\'er-Rao lower bound, $\dot{y}(t) = I(\mu_t)/D \geq \frac{1}{\sigma_t^2}$,
    hence the inequality \eqref{eq:mckean}.
\end{proof}

\begin{lemma} \label{lm:first_lm}
	Let $N \in \NN$ and $Y_1, Y_2, ... \in \RR$
    such that $B_n(Y_1, Y_2, ...) \leq 0$ for all $1 \leq n \leq N$.
	Then $Y_1 \leq 0$ and $Y_n \leq -(n-1)! (-Y_1)^n \leq 0$ for all $1 \leq n \leq N$.
\end{lemma}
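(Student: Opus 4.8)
The plan is to prove both claims by strong induction on $n$, using the classical recurrence satisfied by the complete Bell polynomials. The statement $Y_1 \leq 0$ is immediate, since $B_1(Y_1, Y_2, \dots) = Y_1 \leq 0$ by hypothesis; this also serves as the base case $n=1$ of the second claim, because $-(1-1)!\,(-Y_1)^1 = Y_1$, so the asserted inequality holds with equality. Writing $a := -Y_1 \geq 0$, the goal is thus to show $Y_n \leq -(n-1)!\, a^n$ for all $1 \leq n \leq N$.

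The main tool is the recurrence (with the convention $B_0 = 1$)
\[
    B_n(Y_1, Y_2, \dots) = \sum_{j=1}^n \binom{n-1}{j-1} B_{n-j}(Y_1, Y_2, \dots)\, Y_j ,
\]
which isolates $Y_n$ via the $j=n$ term, namely $\binom{n-1}{n-1} B_0 Y_n = Y_n$. Rearranging and using $B_n \leq 0$, I would obtain
\[
    Y_n = B_n - \sum_{j=1}^{n-1} \binom{n-1}{j-1} B_{n-j}\, Y_j \leq - \sum_{j=1}^{n-1} \binom{n-1}{j-1} B_{n-j}\, Y_j .
\]
It then suffices to bound the right-hand sum from below by $(n-1)!\,a^n$. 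For the indices $1 \leq j \leq n-2$ the factor $B_{n-j}$ satisfies $1 \leq n-j \leq n-1 \leq N$, so $B_{n-j} \leq 0$ by hypothesis, while $Y_j \leq 0$ by the induction hypothesis; hence each such term $\binom{n-1}{j-1} B_{n-j}\, Y_j$ is nonnegative. The entire lower bound therefore comes from the single term $j = n-1$, which equals $(n-1)\, B_1\, Y_{n-1} = (n-1)(-a)\,Y_{n-1}$.

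Here lies the crux of the argument, and the only place where sign handling must be done carefully: applying the induction hypothesis $Y_{n-1} \leq -(n-2)!\,a^{n-1}$ and multiplying through by the nonpositive number $-a = Y_1$ reverses the inequality, giving $(-a)\,Y_{n-1} \geq (n-2)!\,a^n$, so that the distinguished term is $\geq (n-1)!\,a^n$. Combined with the nonnegativity of all other terms, this yields $Y_n \leq -(n-1)!\,a^n \leq 0$, closing the induction. The main obstacle I anticipate is purely bookkeeping: verifying that every Bell-polynomial index invoked stays within the range $[1,N]$ on which the hypothesis $B_m \leq 0$ is available, and tracking the direction of each inequality through the multiplication by $Y_1 \leq 0$ so that the constant comes out as exactly $(n-1)!$.
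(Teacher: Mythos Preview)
Your proposal is correct and follows essentially the same approach as the paper: induction on $n$, using the Bell-polynomial recurrence to isolate $Y_n$, dropping the nonnegative terms $\binom{n-1}{j-1}B_{n-j}Y_j$ for $j\le n-2$, and extracting the factor $(n-1)!\,(-Y_1)^n$ from the distinguished $j=n-1$ term via the induction hypothesis. The only differences from the paper are cosmetic (an index shift and the shorthand $a=-Y_1$).
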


\begin{proof}
	We proceed by finite induction over $n$.
	The case $n=1$ is clear as $B_1(Y_1, Y_2, ...) = Y_1$.
	Fix $n$, suppose $Y_k \leq -(k-1)! (-Y_1)^k \leq 0$ for all $k \leq n$, and let us show the inequality for $n+1$.
	The Bell polynomials satisfy the recurrence relation
	\begin{gather*}
		B_{n+1}(Y_1, Y_2, ...) = \sum_{i=0}^{n} \binom{n}{i} B_{n-i}(Y_1, Y_2, ...) Y_{i+1} \leq 0, \\
		\text{so}~~~~ Y_{n+1} = B_0(Y_1, Y_2, ...) Y_{n+1} \leq -\sum_{i=0}^{n-1} \binom{n}{i} B_{n-i}(Y_1, Y_2, ...) Y_{i+1}.
	\end{gather*}
	Since $B_{n-i}(Y_1, Y_2, ...) \leq 0$ and $-Y_{i+1} \geq 0$ for all $i \leq n-1$ by induction hypothesis, then
	\begin{equation*}
		Y_{n+1} \leq \sum_{i=0}^{n-1} \binom{n}{i} B_{n-i}(Y_1, Y_2, ...) (-Y_{i+1})
        \leq 0 + \binom{n}{n-1} B_1(Y_1, Y_2, ...) (-Y_n)
        = n Y_1 (-Y_n).
	\end{equation*}
    So since $Y_1 \leq 0$ and $Y_n \leq -(n-1)! (-Y_1)^n$ by induction hypothesis,
	\begin{equation*}
		Y_{n+1} 
        \leq n (-Y_1) Y_n
        \leq n (-Y_1) \left[ -(n-1)! (-Y_1)^n \right]
		= -n! (-Y_1)^{n+1},
	\end{equation*}
	which concludes the proof by induction.
\end{proof}

\printbibliography

\end{document}